\newtheorem{proposition}{Proposition}
\begin{document}

\title{Physics-Compliant Modeling and Scaling Laws of\\Multi-RIS Aided Systems}

\author{\IEEEauthorblockN{
Matteo~Nerini\IEEEauthorrefmark{1},
Gabriele~Gradoni\IEEEauthorrefmark{2},
Bruno~Clerckx\IEEEauthorrefmark{1}}

\IEEEauthorblockA{\IEEEauthorrefmark{1}
Department of Electrical and Electronic Engineering, Imperial College London, London, U.K., m.nerini20@imperial.ac.uk}
\IEEEauthorblockA{\IEEEauthorrefmark{2}
Institute for Communication Systems, University of Surrey, Guildford, U.K.}

\thanks{This work was supported in part by UKRI under Grant EP/Y004086/1, EP/X040569/1, EP/Y037197/1, EP/X04047X/1, and EP/Y037243/1.}
}

%\IEEEspecialpapernotice{(Invited Paper)}

\maketitle

% As a general rule, do not put math, special symbols or citations
% in the abstract
\begin{abstract}
Reconfigurable intelligent surface (RIS) is a revolutionary technology enabling the control of wireless channels and improving coverage in wireless networks.
To further extend coverage, multi-RIS aided systems have been explored, where multiple RISs steer the signal toward the receiver via a multi-hop path.
However, deriving a physics-compliant channel model for multi-RIS aided systems is still an open problem.
In this study, we fill this gap by modeling multi-RIS aided systems through multiport network theory, and deriving the scaling law of the physics-compliant channel gain.
The derived physics-compliant channel model differs from the widely used model, where the structural scattering of the RISs is neglected.
Theoretical insights, validated by numerical results, show a significant discrepancy between the physics-compliant and the widely used models.
This discrepancy increases with the number of RISs and decreases with the number of RIS elements, reaching 200\% in a system with eight RISs with 128 elements each.
\end{abstract}

\glsresetall

\vskip0.5\baselineskip
\begin{IEEEkeywords}
Multi-RIS, multiport network analysis, reconfigurable intelligent surface (RIS).
\end{IEEEkeywords}

%%%%%%%%%%%%%%%%%%%%%%%%%%%%%%%%%%%%%%%%%%%%%%%%%%
\section{Introduction}
\label{sec:intro}

% RIS and multi-RIS
Reconfigurable intelligent surface (RIS) has emerged as a technology enabling dynamic control over the \gls{em} propagation environment in wireless networks \cite{wu21}.
RIS leverages surfaces made of elements with programmable reflecting properties to manipulate impinging \gls{em} signals.
%, thereby enhancing the received signal strength and extending coverage.
While most of the literature on RIS focuses on systems aided by a single RIS, multi-RIS aided systems, also known as multi-hop RIS aided systems, have attracted attention as they can further enhance coverage.
Initial studies considered the optimization and analysis of two-RIS aided systems \cite{han20,zhe21-1}.
Besides, multi-RIS aided systems have been studied, where multiple cooperative RISs drive the \gls{em} signal to the intended receiver through a multi-hop path \cite{mei21,hua21,mei22-1,ma22,ngu23}.

% Multiport network theory
Accurate modeling of wireless channels involving RIS is crucial for designing and optimizing RIS-aided systems.
To rigorously model wireless channels in the presence of a single RIS, multiport network analysis has been successfully utilized \cite{gra21,she20}.
Specifically, previous works derived physics-compliant RIS-aided channel models %accounting for the impedance mismatch and mutual coupling effects at the transmitter, receiver, and RIS
by using impedance parameters \cite{gra21} and scattering parameters \cite{she20}.
The relationship between impedance and scattering parameters has been more recently analyzed in \cite{nos24-2,li24,abr23,ner23}, and a model based on the admittance parameters has been derived in \cite{ner23}.

% This work
While substantial effort has been devoted to system-level optimization of multi-RIS aided systems, the rigorous modeling of multi-RIS aided channels is still an open issue.
The channel model widely used in previous works on multi-RIS systems still requires validation, and the assumptions under which it is applicable remain unexplored.
To solve this issue, in this study, we model the channel of multi-RIS aided systems through multiport network analysis, and derive the scaling law of the achievable channel gain.

% Contributions
The contributions of this study are outlined as follows.
\textit{First}, we derive a physics-compliant channel model for multi-RIS aided systems by using multiport network theory, clarifying its underlying assumptions.
Our model accounts for the structural scattering of the RISs, differently from the model widely used in the related literature.
\textit{Second}, we analyze the scaling law of the channel gain considering the derived model.
Specifically, we provide the expression of the average channel gain under \gls{los} channels in closed form.
\textit{Third}, we compare the derived physics-consistent model with the widely used model in terms of channel gain.
Theoretical derivations, supported by numerical simulations, show that the relative difference grows with the number of RISs and decreases with the number of RIS elements.
Remarkably, in a system aided by eight RISs, this difference can be as high as 2000\% with 16-element RISs, and as 200\% with 128-element RISs.

%%%%%%%%%%%%%%%%%%%%%%%%%%%%%%%%%%%%%%%%%%%%%%%%%%
\section{Multiport Network Theory}
\label{sec:analysis}

Consider a communication system between a single-antenna transmitter and a single-antenna receiver aided by $L$ RISs, each having $N_I$ elements, as represented in Fig.~\ref{fig:diagram}.
Following previous literature \cite{gra21,she20,nos24-2,li24,abr23,ner23}, we model the wireless channel as an $N$-port network, with $N=2+LN_I$.

According to multiport network theory \cite[Chapter 4]{poz11}, the $N$-port network modeling the wireless channel can be characterized by its impedance matrix $\mathbf{Z}\in\mathbb{C}^{N\times N}$, which can be partitioned as
\begin{equation}
\mathbf{Z}=
\begin{bmatrix}
z_{TT} & \mathbf{z}_{TI} & z_{TR}\\
\mathbf{z}_{IT} & \mathbf{Z}_{II} & \mathbf{z}_{IR}\\
z_{RT} & \mathbf{z}_{RI} & z_{RR}
\end{bmatrix}.\label{eq:Z}
\end{equation}
$z_{TT}\in\mathbb{C}$ and $z_{RR}\in\mathbb{C}$ refer to the antenna self-impedance at the transmitter and receiver, respectively.
$\mathbf{Z}_{II}\in\mathbb{C}^{LN_I\times LN_I}$ can be partitioned as
\begin{equation}
\mathbf{Z}_{II}=
\begin{bmatrix}
\mathbf{Z}_{II,1} & \mathbf{Z}_{1,2} & \cdots & \mathbf{Z}_{1,L}\\
\mathbf{Z}_{2,1} & \mathbf{Z}_{II,2} & \cdots & \mathbf{Z}_{2,L}\\
\vdots & \vdots & \ddots & \vdots\\
\mathbf{Z}_{L,1} & \mathbf{Z}_{L,2} & \cdots & \mathbf{Z}_{II,L}
\end{bmatrix},
\end{equation}
where $\mathbf{Z}_{II,\ell}\in\mathbb{C}^{N_I\times N_I}$ refer to the impedance matrix of the antenna array at the $\ell$th RIS, whose diagonal entries refer to the antenna self-impedance while the off-diagonal entries refer to antenna mutual coupling, and $\mathbf{Z}_{i,j}\in\mathbb{C}^{N_I\times N_I}$ refer to the transmission impedance matrix from the $j$th RIS to the $i$th RIS.
Accordingly, $\mathbf{z}_{IT}\in\mathbb{C}^{LN_I\times 1}$ can be partitioned as
$\mathbf{z}_{IT}=[\mathbf{z}_{IT,1}^T,\mathbf{z}_{IT,2}^T,\ldots,\mathbf{z}_{IT,L}^T]^T$,
%\begin{equation}
%\mathbf{z}_{IT}=\left[\mathbf{z}_{IT,1}^T,\mathbf{z}_{IT,2}^T,\ldots,\mathbf{z}_{IT,L}^T\right]^T,
%\end{equation}
where $\mathbf{z}_{IT,\ell}\in\mathbb{C}^{N_I\times 1}$ is the transmission impedance matrix from the transmitter to the $\ell$th RIS.
$\mathbf{z}_{RI}\in\mathbb{C}^{1\times LN_I}$ can be partitioned as
$\mathbf{z}_{RI}=[\mathbf{z}_{RI,1},\mathbf{z}_{RI,2},\ldots,\mathbf{z}_{RI,L}]$,
%\begin{equation}
%\mathbf{z}_{RI}=\left[\mathbf{z}_{RI,1},\mathbf{z}_{RI,2},\ldots,\mathbf{z}_{RI,L}\right],
%\end{equation}
where $\mathbf{z}_{RI,\ell}\in\mathbb{C}^{1\times N_I}$ is the transmission impedance matrix from the $\ell$th RIS to the receiver.
$z_{RT}\in\mathbb{C}$ refer to the transmission impedance matrix from the transmitter to the receiver.
Similarly, $\mathbf{z}_{TI}\in\mathbb{C}^{1\times LN_I}$, $\mathbf{z}_{IR}\in\mathbb{C}^{LN_I\times 1}$, and $z_{TR}\in\mathbb{C}$ refer to the transmission impedance matrices from the RISs to transmitter, and from the receiver to RISs, and from the receiver to transmitter, respectively.
For reciprocal wireless channels, it holds $z_{TR}=z_{RT}$, $\mathbf{z}_{TI}=\mathbf{z}_{IT}^T$, and $\mathbf{z}_{IR}=\mathbf{z}_{RI}^T$.

\begin{figure}[t]
\centering
\includegraphics[width=0.48\textwidth]{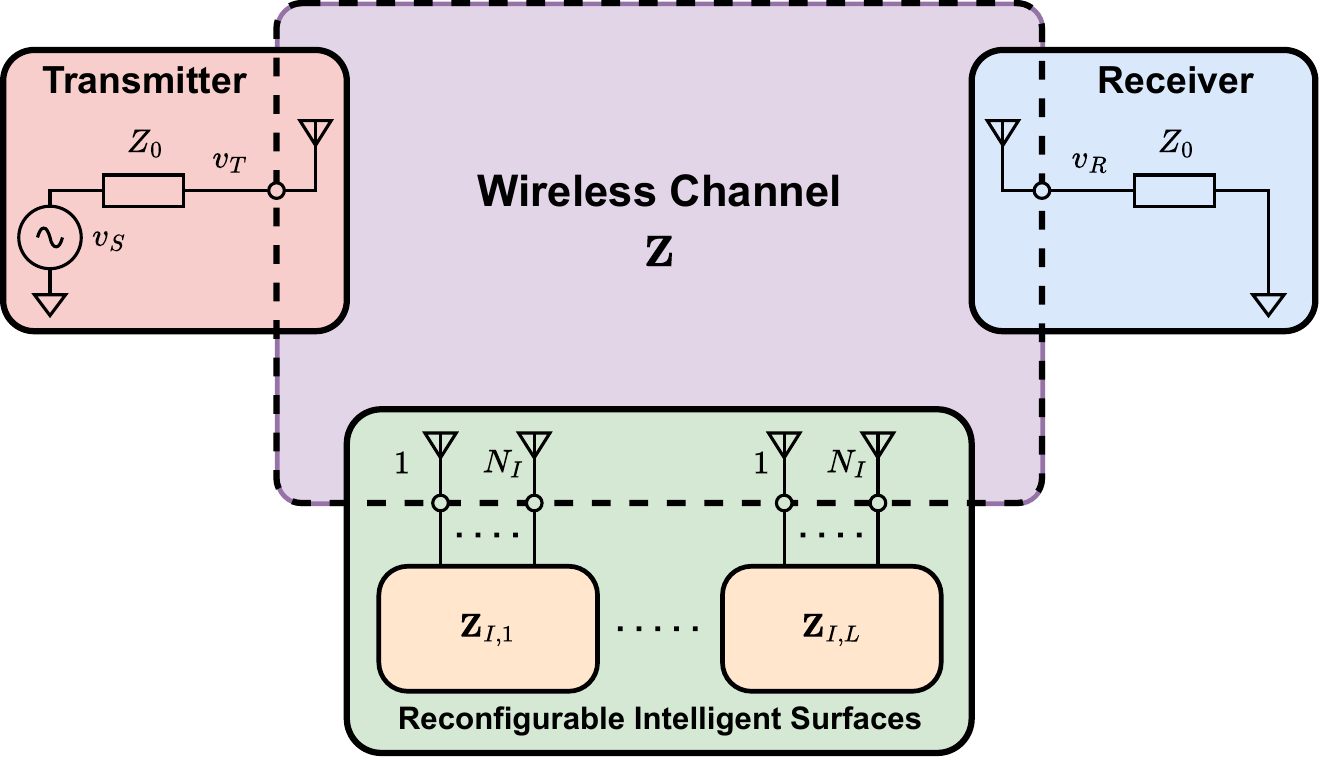}
\caption{Multi-RIS aided system modeled through multiport network theory.}
\label{fig:diagram}
\end{figure}

At the transmitter, the transmitting antenna is connected in series with a source voltage $v_{S}\in\mathbb{C}$ and a source impedance $Z_0$, e.g., set to $Z_0=50\:\Omega$, and we denote the voltage at the transmitting antenna as $v_{T}\in\mathbb{C}$.
At the $\ell$th RIS, the $N_I$ antennas are connected to an $N_I$-port reconfigurable impedance network with impedance matrix denoted as $\mathbf{Z}_{I,\ell}\in\mathbb{C}^{N_I\times N_I}$, for $\ell=1,\ldots,L$.
At the receiver, the receiving antenna is connected in series with a load impedance $Z_0$, and we denote the voltage at the receiving antenna as $v_{R}\in\mathbb{C}$.
In this study, our goal is to characterize the expression of the channel $h\in\mathbb{C}$ relating the transmitted signal $v_T$ and the received signal $v_R$ through
$v_R=hv_T$.
%\begin{equation}
%v_R=hv_T,
%\end{equation}
%when the communication is aided by the $L$ RISs.

%%%%%%%%%%%%%%%%%%%%%%%%%%%%%%%%%%%%%%%%%%%%%%%%%%
\section{Multi-RIS Aided Channel Model}
\label{sec:model}

% 2 assumptions
As highlighted in previous work, the channel model for $h$ derived through multiport network theory with no assumptions is difficult to interpret \cite{gra21,she20,ner23}.
For this reason, we consider the following two assumptions, commonly accepted in the literature.
\begin{enumerate}
\item We assume sufficiently large transmission distances such that the currents at the transmitter/transmitter/RISs are independent of the currents at the RISs/receiver/receiver, respectively, also known as the unilateral approximation \cite{ivr10}, i.e., $\mathbf{z}_{TI}=\mathbf{0}$, $z_{TR}=0$, and $\mathbf{z}_{IR}=\mathbf{0}$.
\item We assume that the transmitting and receiving antennas are perfectly matched to the reference impedance $Z_0$, i.e., $z_{TT}=Z_0$ and $z_{RR}=Z_0$.
\end{enumerate}
Considering these two assumptions, it is possible to show that the channel $h$ is given by
\begin{equation}
h=\frac{1}{2Z_0}\left(z_{RT}-\mathbf{z}_{RI}\left(\mathbf{Z}_I+\mathbf{Z}_{II}\right)^{-1}\mathbf{z}_{IT}\right),\label{eq:H1}
\end{equation}
where $\mathbf{Z}_I\in\mathbb{C}^{LN_I\times LN_I}$ is a block diagonal matrix defined as
$\mathbf{Z}_{I}=\text{diag}(\mathbf{Z}_{I,1},\mathbf{Z}_{I,2},\ldots,\mathbf{Z}_{I,L})$,
%\begin{equation}
%\mathbf{Z}_{I}=\text{diag}\left(\mathbf{Z}_{I,1},\mathbf{Z}_{I,2},\ldots,\mathbf{Z}_{I,L}\right),
%\end{equation}
having in the $\ell$th block the impedance matrix of the reconfigurable impedance network at the $\ell$th RIS \cite{gra21,ner23}.

% Additional 4 assumptions
Since the impact of the reconfigurable impedance matrices of the RISs $\mathbf{Z}_{I,\ell}$ is not apparent in \eqref{eq:H1} due to the matrix inversion operation, it is necessary to further simplify this model.
To this end, we consider with no loss of generality that the signal sent by the transmitter reaches the receiver by flowing from the $1$st RIS to the $L$th RIS.
Thus, we make the following four additional assumptions to obtain an interpretable channel model.
\begin{enumerate}
\setcounter{enumi}{2}
\item We assume that the channel between the transmitter and RIS $2,\ldots,L$ is blocked, i.e., $\mathbf{z}_{IT,\ell}=\mathbf{0}$, for $\ell=2,\ldots,L$, the channel between the receiver and RIS $1,\ldots,L-1$ is blocked, i.e., $\mathbf{z}_{RI,\ell}=\mathbf{0}$, for $\ell=1,\ldots,L-1$, and the channel between the transmitter and receiver is blocked, i.e., $z_{RT}=0$.
\item We assume large enough distances between the RISs such that the currents at the $i$th RIS are independent of the currents at the $j$th RIS, with $i<j$.
As in assumption~1), this is referred to as the unilateral approximation and allows us to set to zero the feedback channels between the RISs, i.e., $\mathbf{Z}_{i,j}=\mathbf{0}$, $\forall i<j$ \cite{ivr10}.
\item We assume that the $\ell$th RIS is only connected to the $(\ell-1)$th (if $\ell\neq1$) and $(\ell+1)$th (if $\ell\neq L$) RISs, for $\ell=1,\ldots,L$.
In other words, we assume that the channel between the $i$th and $j$th RIS is completely obstructed when $i-j\geq2$, i.e., $\mathbf{Z}_{i,j}=\mathbf{0}$ if $i-j\geq2$.
\item We assume perfect matching to $Z_0$ and no mutual coupling at all the $L$ RISs, i.e., $\mathbf{Z}_{II,\ell}=Z_0\mathbf{I}$, for $\ell=1,\ldots,L$, achievable by implementing the RISs via large reflectarrays with half-wavelength spacing.
\end{enumerate}
To simplify the channel model in \eqref{eq:H1} according to these assumptions, we introduce the matrix
\begin{equation}
\bar{\mathbf{Y}}=(\mathbf{Z}_I+\mathbf{Z}_{II})^{-1},
\end{equation}
partitioned as
\begin{equation}
\bar{\mathbf{Y}}=
\begin{bmatrix}
\bar{\mathbf{Y}}_{1,1} & \bar{\mathbf{Y}}_{1,2} & \cdots & \bar{\mathbf{Y}}_{1,L}\\
\bar{\mathbf{Y}}_{2,1} & \bar{\mathbf{Y}}_{2,2} & \cdots & \bar{\mathbf{Y}}_{2,L}\\
\vdots & \vdots & \ddots & \vdots\\
\bar{\mathbf{Y}}_{L,1} & \bar{\mathbf{Y}}_{L,2} & \cdots & \bar{\mathbf{Y}}_{L,L}
\end{bmatrix},
\end{equation}
where $\bar{\mathbf{Y}}_{i,j}\in\mathbb{C}^{N_I\times N_I}$, for $i,j=1,\ldots,L$.
Thus, the channel in \eqref{eq:H1} can be rewritten as
\begin{align}
h
&=\frac{1}{2Z_0}\left(z_{RT}-\sum_{i=1}^L\sum_{j=1}^L\mathbf{z}_{RI,i}\bar{\mathbf{Y}}_{i,j}\mathbf{z}_{IT,j}\right)\\
&=-\frac{1}{2Z_0}\left(\mathbf{z}_{RI,L}\bar{\mathbf{Y}}_{L,1}\mathbf{z}_{IT,1}\right),\label{eq:H2}
\end{align}
following Assumption~3), where the role of the transmission impedance matrices $\mathbf{z}_{RI,L}$ and $\mathbf{z}_{IT,1}$ is highlighted.
Furthermore, with assumptions~4), 5), and 6), the matrix $\mathbf{Z}_{II}$ simplifies as
\begin{equation}
\mathbf{Z}_{II}=
\begin{bmatrix}
Z_0\mathbf{I} &  &  &  & \mathbf{0}\\
\mathbf{Z}_{2,1} & Z_0\mathbf{I} &  &  & \\
 & \mathbf{Z}_{3,2} & \ddots &  & \\
 &  & \ddots & Z_0\mathbf{I} & \\
\mathbf{0} &  &  & \mathbf{Z}_{L,L-1} & Z_0\mathbf{I}\\
\end{bmatrix},
\end{equation}
having non-zero block matrices only in the diagonal and subdiagonal.
To accordingly simplify the channel model in \eqref{eq:H2}, we introduce the following proposition.

\begin{proposition}
Consider a square block matrix $\mathbf{M}\in\mathbb{C}^{LN\times LN}$ having square matrices $\mathbf{D}_\ell\in\mathbb{C}^{N\times N}$ in the diagonal, for $\ell=1,\ldots,L$, and square matrices $\mathbf{S}_{\ell,\ell-1}\in\mathbb{C}^{N\times N}$ in the subdiagonal, for $\ell=2,\ldots,L$, with all other blocks being zero matrices, i.e.,
\begin{equation}
\mathbf{M}=
\begin{bmatrix}
\mathbf{D}_1 &  &  &  & \mathbf{0}\\
\mathbf{S}_{2,1} & \mathbf{D}_{2} &  &  & \\
 & \mathbf{S}_{3,2} & \ddots &  & \\
 &  & \ddots & \mathbf{D}_{L-1} & \\
\mathbf{0} &  &  & \mathbf{S}_{L,L-1} & \mathbf{D}_L\\
\end{bmatrix}.\label{eq:M}
\end{equation}
If all $\mathbf{D}_\ell$ are invertible, the inverse of $\mathbf{M}$, denoted as $\mathbf{N}=\mathbf{M}^{-1}\in\mathbb{C}^{LN\times LN}$, is a square block matrix partitioned as
\begin{equation}
\mathbf{N}=
\begin{bmatrix}
\mathbf{N}_{1,1} & \mathbf{N}_{1,2} & \cdots & \mathbf{N}_{1,L}\\
\mathbf{N}_{2,1} & \mathbf{N}_{2,2} & \cdots & \mathbf{N}_{2,L}\\
\vdots & \vdots & \ddots & \vdots\\
\mathbf{N}_{L,1} & \mathbf{N}_{L,2} & \cdots & \mathbf{N}_{L,L}
\end{bmatrix},
\end{equation}
where $\mathbf{N}_{i,j}\in\mathbb{C}^{N\times N}$ is given by\footnote{Note that the index $k$ in the product in \eqref{eq:N} decreases from $i$ to $j+1$ since $i>j$.
The decreasing order matters because of the non-commutativity of matrix multiplication.}
\begin{equation}
\mathbf{N}_{i,j}=
\begin{cases}
\mathbf{0} & \text{if }i<j\\
\mathbf{D}_{i}^{-1} & \text{if }i=j\\
\left(-1\right)^{i-j}\mathbf{D}_{i}^{-1}
\prod_{k=i}^{j+1}\left(\mathbf{S}_{k,k-1}\mathbf{D}_{k-1}^{-1}\right) & \text{if }i>j\\
\end{cases}.\label{eq:N}
\end{equation}
\label{pro}
\end{proposition}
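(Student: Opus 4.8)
The plan is to verify directly that the claimed block matrix $\mathbf{N}$ is the inverse of $\mathbf{M}$ by checking $\mathbf{M}\mathbf{N}=\mathbf{I}$ block by block, exploiting that $\mathbf{M}$ carries nonzero blocks only on its block diagonal and block subdiagonal. Since $\mathbf{M}$ is block lower triangular with invertible diagonal blocks, it is itself invertible (its determinant is $\prod_\ell\det\mathbf{D}_\ell$), so establishing a right inverse suffices to identify $\mathbf{N}=\mathbf{M}^{-1}$. The crucial simplification is that for this bidiagonal structure the generic block of the product collapses to only two surviving terms,
\begin{equation}
(\mathbf{M}\mathbf{N})_{i,j}=\mathbf{D}_i\mathbf{N}_{i,j}+\mathbf{S}_{i,i-1}\mathbf{N}_{i-1,j},
\end{equation}
with the convention that the second term is absent when $i=1$. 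The task reduces to showing this equals $\mathbf{I}$ for $i=j$ and $\mathbf{0}$ otherwise.

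First I would dispose of the easy cases. For $i<j$, both $\mathbf{N}_{i,j}$ and $\mathbf{N}_{i-1,j}$ vanish by the first branch of \eqref{eq:N}, so the product block is $\mathbf{0}$. For $i=j$, substituting $\mathbf{N}_{i,i}=\mathbf{D}_i^{-1}$ gives $\mathbf{D}_i\mathbf{N}_{i,i}=\mathbf{I}$, while $\mathbf{N}_{i-1,i}=\mathbf{0}$ because $i-1<i$; hence the diagonal block equals $\mathbf{I}$ as required.

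The substance lies in the strictly lower case $i>j$, where I must show the two surviving terms cancel. The key observation is that the explicit formula in \eqref{eq:N} is engineered precisely so that $\mathbf{N}_{i,j}=-\mathbf{D}_i^{-1}\mathbf{S}_{i,i-1}\mathbf{N}_{i-1,j}$: stripping the leading factor $\mathbf{S}_{i,i-1}\mathbf{D}_{i-1}^{-1}$ off the front of the ordered product together with the sign identity $(-1)^{i-j}=-(-1)^{(i-1)-j}$ reproduces exactly the block $\mathbf{N}_{i-1,j}$. This relation holds uniformly, including the boundary subcase $i=j+1$, where $\mathbf{N}_{i-1,j}=\mathbf{D}_j^{-1}$ and the product reduces to a single factor. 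Multiplying the relation through by $\mathbf{D}_i$ yields $\mathbf{D}_i\mathbf{N}_{i,j}+\mathbf{S}_{i,i-1}\mathbf{N}_{i-1,j}=\mathbf{0}$, as needed.

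The main obstacle I anticipate is purely bookkeeping: because the blocks do not commute, I must ensure that the product $\prod_{k=i}^{j+1}(\mathbf{S}_{k,k-1}\mathbf{D}_{k-1}^{-1})$ is expanded in the correct decreasing order of $k$ and that the leading factor is peeled from the correct end, so that the cancellation is genuine rather than merely formal. Equivalently, and perhaps more cleanly, one may read $\mathbf{N}_{i,j}=-\mathbf{D}_i^{-1}\mathbf{S}_{i,i-1}\mathbf{N}_{i-1,j}$ as a forward recursion in $i$ (for fixed $j$) coming directly from $\mathbf{M}\mathbf{N}=\mathbf{I}$, and solve it by induction on $i-j$; iterating the recursion telescopes into the closed-form ordered product of \eqref{eq:N} and generates the alternating sign automatically, which minimizes the risk of sign or ordering errors.
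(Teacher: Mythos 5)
Your proof is correct and follows the same strategy as the paper, whose proof simply observes that the product $\mathbf{M}\mathbf{N}$ equals the identity matrix; you carry out that block-by-block verification explicitly (the two-term collapse $(\mathbf{M}\mathbf{N})_{i,j}=\mathbf{D}_i\mathbf{N}_{i,j}+\mathbf{S}_{i,i-1}\mathbf{N}_{i-1,j}$ and the cancellation for $i>j$), which is exactly the computation the paper leaves implicit.
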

\begin{proof}
To prove the proposition, it is sufficient to notice that the matrix product $\mathbf{M}\mathbf{N}$ is the identity matrix.
\end{proof}

Since Proposition~\ref{pro} gives
\begin{multline}
\bar{\mathbf{Y}}_{L,1}=\left(-1\right)^{L-1}
\left(\mathbf{Z}_{I,L}+Z_0\mathbf{I}\right)^{-1}\\
\times\prod_{\ell=L}^{2}\left(\mathbf{Z}_{\ell,\ell-1}\left(\mathbf{Z}_{I,\ell-1}+Z_0\mathbf{I}\right)^{-1}\right),\label{eq:pro2}
\end{multline}
the channel in \eqref{eq:H2} can be further rewritten as
\begin{multline}
h
=-\frac{\left(-1\right)^{L-1}}{2Z_0}\Biggl(\mathbf{z}_{RI,L}\left(\mathbf{Z}_{I,L}+Z_0\mathbf{I}\right)^{-1}\Biggr.\\
\left.\times\prod_{\ell=L}^{2}\left(\mathbf{Z}_{\ell,\ell-1}\left(\mathbf{Z}_{I,\ell-1}+Z_0\mathbf{I}\right)^{-1}\right)\mathbf{z}_{IT,1}
\right),\label{eq:H3}
\end{multline}
explicitly emphasizing the impact of the transmission impedance matrices $\mathbf{Z}_{\ell,\ell-1}$ and the reconfigurable impedance matrices of the RISs $\mathbf{Z}_{I,\ell}$.
Remarkably, \eqref{eq:H3} gives the channel model of a multi-RIS aided system in the impedance parameters, or $Z$-parameters, of interest for less explored cascaded RIS-aided systems.

% S-parameters
In the related literature, a RIS is often characterized through its scattering matrix \cite{she20}, which is related to the impedance matrix through a specific mapping according to microwave network theory \cite[Chapter 4]{poz11}.
Specifically, the scattering matrix of the $\ell$th RIS, denoted as $\boldsymbol{\Theta}_{\ell}\in\mathbb{C}^{N_I\times N_I}$, is related to $\mathbf{Z}_{I,\ell}$ through
\begin{equation}
\boldsymbol{\Theta}_{\ell}=\left(\mathbf{Z}_{I,\ell}+Z_0\mathbf{I}\right)^{-1}\left(\mathbf{Z}_{I,\ell}-Z_0\mathbf{I}\right),\label{eq:S}
\end{equation}
as discussed in \cite[Chapter 4]{poz11}.
Thus, by substituting
%$(\mathbf{Z}_{I,\ell}+Z_0\mathbf{I})^{-1}=-(\boldsymbol{\Theta}_{\ell}-\mathbf{I})/(2Z_0)$,
\begin{equation}
\left(\mathbf{Z}_{I,\ell}+Z_0\mathbf{I}\right)^{-1}=-\frac{1}{2Z_0}\left(\boldsymbol{\Theta}_{\ell}-\mathbf{I}\right),
\end{equation}
which is a direct consequence of \eqref{eq:S}, in \eqref{eq:H3}, and introducing the notation 
%$\mathbf{h}_{RI,\ell}=\mathbf{z}_{RI,\ell}/(2Z_0)$, $\mathbf{h}_{IT,\ell}=\mathbf{z}_{IT,\ell}/(2Z_0)$, and $\mathbf{H}_{\ell,\ell-1}=\mathbf{Z}_{\ell,\ell-1}/(2Z_0)$,
\begin{equation}
\mathbf{h}_{RI,L}=\frac{\mathbf{z}_{RI,L}}{2Z_0},\:
\mathbf{h}_{IT,1}=\frac{\mathbf{z}_{IT,1}}{2Z_0},\:
\mathbf{H}_{\ell,\ell-1}=\frac{\mathbf{Z}_{\ell,\ell-1}}{2Z_0},
\end{equation}
for $\ell=2,\ldots,L$, we obtain
\begin{equation}
h=
\mathbf{h}_{RI,L}\left(\boldsymbol{\Theta}_{L}-\mathbf{I}\right)
\prod_{\ell=L}^{2}\left(\mathbf{H}_{\ell,\ell-1}\left(\boldsymbol{\Theta}_{\ell-1}-\mathbf{I}\right)\right)\mathbf{h}_{IT,1},\label{eq:H}
\end{equation}
representing the channel model in the multi-RIS aided scenario illustrated in Fig.~\ref{fig:system}.
Interestingly, this scenario has been commonly studied in related literature on multi-RIS aided communications \cite{mei21,hua21,mei22-1,ma22,ngu23}.
However, the channel model in \eqref{eq:H} differs from the channel model widely used in related literature, which is instead given by
\begin{equation}
h^\prime=\mathbf{h}_{RI,L}\boldsymbol{\Theta}_{L}\prod_{\ell=L}^{2}\left(\mathbf{H}_{\ell,\ell-1}\boldsymbol{\Theta}_{\ell-1}\right)\mathbf{h}_{IT,1},\label{eq:Hprime}
\end{equation}
as employed in \cite{mei21,hua21,mei22-1,ma22,ngu23}.
Note that the only difference between \eqref{eq:H} and \eqref{eq:Hprime} lies in the fact that the terms $(\boldsymbol{\Theta}_{\ell}-\mathbf{I})$ in \eqref{eq:H} are replaced by the terms $\boldsymbol{\Theta}_{\ell}$ in \eqref{eq:Hprime}, for $\ell=1,\ldots,L$.
Remarkably, this is because the structural scattering effects of RIS are commonly neglected in related literature \cite{nos24-2}.

\begin{figure}[t]
\centering
\includegraphics[width=0.48\textwidth]{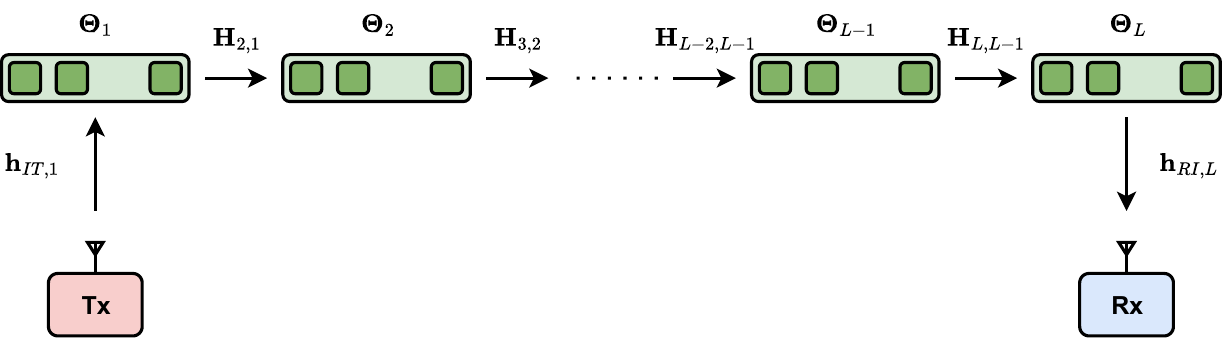}
\caption{Multi-RIS aided system.}
\label{fig:system}
\end{figure}

%%%%%%%%%%%%%%%%%%%%%%%%%%%%%%%%%%%%%%%%%%%%%%%%%%
\section{Channel Gain Scaling Laws}
\label{sec:law}

To compare the physics-compliant channel model in \eqref{eq:H} and the widely used model in \eqref{eq:Hprime}, we derive the scaling laws of their channel gains, i.e., how they scale with $N_I$ as $N_I$ grows large.
To this end, we consider $\mathbf{h}_{RI,L}$, $\mathbf{h}_{IT,1}$, and $\mathbf{H}_{\ell,\ell-1}$, for $\ell=2,\ldots,L$, to be \gls{los} channels.
In this case, the channel between the $L$th RIS and the receiver writes as $\mathbf{h}_{RI,L}=\Lambda_{RI,L}\tilde{\mathbf{h}}_{RI,L}$, with $\Lambda_{RI,L}\in\mathbb{R}_+$ being the path gain and $\tilde{\mathbf{h}}_{RI,L}=[e^{j\varphi_{RI,L,1}},e^{j\varphi_{RI,L,2}},\ldots,e^{j\varphi_{RI,L,N_I}}]$.
Similarly, the channel between the transmitter and the $1$st RIS is given by $\mathbf{h}_{IT,1}=\Lambda_{IT,1}\tilde{\mathbf{h}}_{IT,1}$, with $\Lambda_{IT,1}\in\mathbb{R}_+$ being the path gain and $\tilde{\mathbf{h}}_{IT,1}=[e^{j\varphi_{IT,1,1}},e^{j\varphi_{IT,1,2}},\ldots,e^{j\varphi_{IT,1,N_I}}]^T$.
Besides, $\mathbf{H}_{\ell,\ell-1}=\Lambda_{\ell,\ell-1}\tilde{\mathbf{a}}_{\ell,\ell-1}\tilde{\mathbf{b}}_{\ell,\ell-1}$, with $\Lambda_{\ell,\ell-1}\in\mathbb{R}_+$ being the path gain, $\tilde{\mathbf{a}}_{\ell,\ell-1}=[e^{j\alpha_{\ell,\ell-1,1}},e^{j\alpha_{\ell,\ell-1,2}},\ldots,e^{j\alpha_{\ell,\ell-1,N_I}}]^T$, and $\tilde{\mathbf{b}}_{\ell,\ell-1}=[e^{j\beta_{\ell,\ell-1,1}},e^{j\beta_{\ell,\ell-1,2}},\ldots,e^{j\beta_{\ell,\ell-1,N_I}}]$, for $\ell=2,\ldots,L$.
Considering conventional RIS architectures, the scattering matrices of the RISs are in the form $\boldsymbol{\Theta}_{\ell}=\text{diag}(e^{j\theta_{\ell,1}},e^{j\theta_{\ell,2}},\ldots,e^{j\theta_{\ell,N_I}})$, with $\theta_{\ell,n_I}$ being the $n_I$th phase shift of the $\ell$th RIS, for $n_I=1,\ldots,N_I$ and $\ell=1,\ldots,L$.
In the following, we investigate the maximum channel gain achievable in the case of the physics-compliant and the widely used model.

\subsection{Physics-Compliant Channel Model}

Considering the channel model in \eqref{eq:H}, we can derive a global optimal solution for the scattering matrices $\boldsymbol{\Theta}_{\ell}$ to maximize the channel gain $\vert h\vert^2$ by adapting the optimization method proposed in \cite{mei21} for the widely used channel model.
Specifically, for a system with \gls{los} channels, \eqref{eq:H} can be expressed as
\begin{multline}
h=
\Lambda
\tilde{\mathbf{h}}_{RI,L}\left(\boldsymbol{\Theta}_{L}-\mathbf{I}\right)\tilde{\mathbf{a}}_{L,L-1}\\
\times\prod_{\ell=L-1}^{2}\left(\tilde{\mathbf{b}}_{\ell+1,\ell}\left(\boldsymbol{\Theta}_{\ell}-\mathbf{I}\right)\tilde{\mathbf{a}}_{\ell,\ell-1}\right)
\tilde{\mathbf{b}}_{2,1}\left(\boldsymbol{\Theta}_{1}-\mathbf{I}\right)\tilde{\mathbf{h}}_{IT,1},\label{eq:Hlos}
\end{multline}
where $\Lambda=\Lambda_{RI,L}\prod_{\ell=L}^{2}(\Lambda_{\ell,\ell-1})\Lambda_{IT,1}$ is the total path gain.
Interestingly, \eqref{eq:Hlos} can be rewritten as the product of scalar values $K_{\ell}\in\mathbb{C}$, i.e.,
\begin{equation}
h=
\Lambda\prod_{\ell=L}^{1}K_{\ell},
\end{equation}
where $K_{L}=\tilde{\mathbf{h}}_{RI,L}(\boldsymbol{\Theta}_{L}-\mathbf{I})\tilde{\mathbf{a}}_{L,L-1}$, $K_{1}=\tilde{\mathbf{b}}_{2,1}(\boldsymbol{\Theta}_{1}-\mathbf{I})\tilde{\mathbf{h}}_{IT,1}$, and $K_{\ell}=\tilde{\mathbf{b}}_{\ell+1,\ell}(\boldsymbol{\Theta}_{\ell}-\mathbf{I})\tilde{\mathbf{a}}_{\ell,\ell-1}$, for $\ell=2,\ldots,L-1$.
Thus, each $\boldsymbol{\Theta}_{\ell}$ can be globally optimized individually to maximize $\vert K_{\ell}\vert^2$, respectively, by setting
\begin{multline}
\theta_{L,n_I}=-\arg(\tilde{\mathbf{h}}_{RI,L}\tilde{\mathbf{a}}_{L,L-1})\\
-\arg([\tilde{\mathbf{h}}_{RI,L}]_{n_I})-\arg([\tilde{\mathbf{a}}_{L,L-1}]_{n_I}),
\end{multline}
%for $n_I=1,\ldots,N_I$,
\begin{multline}
\theta_{1,n_I}=-\arg(\tilde{\mathbf{b}}_{2,1}\tilde{\mathbf{h}}_{IT,1})\\
-\arg([\tilde{\mathbf{b}}_{2,1}]_{n_I})-\arg([\tilde{\mathbf{h}}_{IT,1}]_{n_I}),
\end{multline}
%for $n_I=1,\ldots,N_I$, and
\begin{multline}
\theta_{\ell,n_I}=-\arg(\tilde{\mathbf{b}}_{\ell+1,\ell}\tilde{\mathbf{a}}_{\ell,\ell-1})\\
-\arg([\tilde{\mathbf{b}}_{\ell+1,\ell}]_{n_I})-\arg([\tilde{\mathbf{a}}_{\ell,\ell-1}]_{n_I}),
\end{multline}
for $\ell=2,\ldots,L-1$ and $n_I=1,\ldots,N_I$.

Consequently, the maximum channel gain achievable is given by
\begin{multline}
\left\vert h\right\vert^2=\Lambda^2\left(\left\vert\tilde{\mathbf{h}}_{RI,L}\tilde{\mathbf{a}}_{L,L-1}\right\vert+N_I\right)^2\\
\times\prod_{\ell=L-1}^{2}\left(\left\vert\tilde{\mathbf{b}}_{\ell+1,\ell}\tilde{\mathbf{a}}_{\ell,\ell-1}\right\vert+N_I\right)^2\left(\left\vert\tilde{\mathbf{b}}_{2,1}\tilde{\mathbf{h}}_{IT,1}\right\vert+N_I\right)^2,
\end{multline}
which depends on the specific channel realizations $\mathbf{h}_{RI,L}$, $\mathbf{h}_{IT,1}$, and $\mathbf{H}_{\ell,\ell-1}$, for $\ell=2,\ldots,L$.
%https://en.wikipedia.org/wiki/Complex_random_variable
%https://en.wikipedia.org/wiki/Variance
To derive its expected value $\text{E}[\vert h\vert^2]$, we can regard the scalar products $\tilde{\mathbf{h}}_{RI,L}\tilde{\mathbf{a}}_{L,L-1}$, $\tilde{\mathbf{b}}_{\ell+1,\ell}\tilde{\mathbf{a}}_{\ell,\ell-1}$, and $\tilde{\mathbf{b}}_{2,1}\tilde{\mathbf{h}}_{IT,1}$ as independent sums of $N_I$ \gls{iid} complex random variables with unit modulus and uniformly distributed phase, i.e., with mean $0$ and variance $1$.
In this way, $\tilde{\mathbf{h}}_{RI,L}\tilde{\mathbf{a}}_{L,L-1}$, $\tilde{\mathbf{b}}_{\ell+1,\ell}\tilde{\mathbf{a}}_{\ell,\ell-1}$, and $\tilde{\mathbf{b}}_{2,1}\tilde{\mathbf{h}}_{IT,1}$ are independent and all distributed as $\mathcal{CN}(0,N_I)$ by the Central Limit Theorem, for $N_I$ large enough.
Thus, we have
\begin{equation}
\text{E}\left[\left\vert h\right\vert^2\right]=\Lambda^2\text{E}\left[\left(\left\vert c\right\vert+N_I\right)^2\right]^L,
\end{equation}
where $c\sim\mathcal{CN}(0,N_I)$.
By using $\text{E}[\vert c\vert]=\sqrt{\frac{\pi}{4}N_I}$ and $\text{E}[\vert c\vert^2]=N_I$, we obtain
\begin{equation}
\text{E}\left[\left\vert h\right\vert^2\right]=\Lambda^2\left(N_I^2+\sqrt{\pi N_I}N_I+N_I\right)^L,\label{eq:EG}
\end{equation}
giving the scaling law (for sufficiently large $N_I$) of the average channel gain for the physics-compliant model in closed form.

\subsection{Widely Used Channel Model}

Considering the widely used channel model in \eqref{eq:Hprime} in the case of a system with \gls{los} channels, we have
\begin{equation}
h^\prime=
\Lambda\prod_{\ell=L}^{1}K_{\ell}^\prime,
\end{equation}
where $K_{\ell}^\prime\in\mathbb{C}$ are given by $K_{L}^\prime=\tilde{\mathbf{h}}_{RI,L}\boldsymbol{\Theta}_{L}\tilde{\mathbf{a}}_{L,L-1}$, $K_{1}^\prime=\tilde{\mathbf{b}}_{2,1}\boldsymbol{\Theta}_{1}\tilde{\mathbf{h}}_{IT,1}$, and $K_{\ell}^\prime=\tilde{\mathbf{b}}_{\ell+1,\ell}\boldsymbol{\Theta}_{\ell}\tilde{\mathbf{a}}_{\ell,\ell-1}$, for $\ell=2,\ldots,L-1$.
As discussed in \cite{mei21}, the channel gain $\vert h^\prime\vert^2$ can be globally maximized by setting
\begin{align}
\theta_{L,n_I}&=-\arg([\tilde{\mathbf{h}}_{RI,L}]_{n_I})-\arg([\tilde{\mathbf{a}}_{L,L-1}]_{n_I}),\\
\theta_{1,n_I}&=-\arg([\tilde{\mathbf{b}}_{2,1}]_{n_I})-\arg([\tilde{\mathbf{h}}_{IT,1}]_{n_I}),\\
\theta_{\ell,n_I}&=-\arg([\tilde{\mathbf{b}}_{\ell+1,\ell}]_{n_I})-\arg([\tilde{\mathbf{a}}_{\ell,\ell-1}]_{n_I}),
\end{align}
for $\ell=2,\ldots,L-1$ and $n_I=1,\ldots,N_I$.
Accordingly, the maximum channel gain and its expected value are given by
\begin{equation}
\left\vert h^\prime\right\vert^2=\text{E}\left[\left\vert h^\prime\right\vert^2\right]=\Lambda^2N_I^{2L},\label{eq:Gprime}
\end{equation}
in agreement with previous literature \cite{mei21}.

\begin{figure}[t]
\centering
\includegraphics[width=0.32\textwidth]{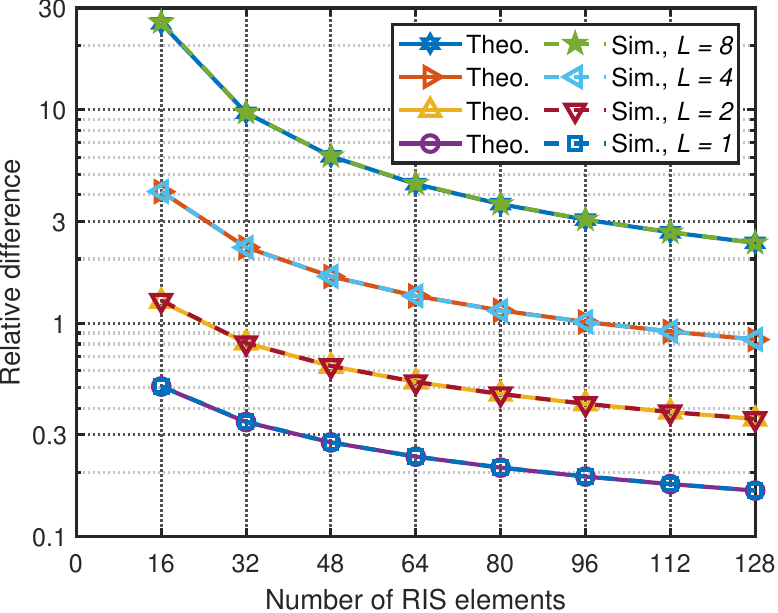}
\caption{Relative difference between the average channel gain with the physics-compliant model $\text{E}[\vert h\vert^2]$ and the widely used model $\text{E}[\vert h^\prime\vert^2]$.}
\label{fig:diff}
\end{figure}

%%%%%%%%%%%%%%%%%%%%%%%%%%%%%%%%%%%%%%%%%%%%%%%%%%
\section{Numerical Results}
\label{sec:results}

We numerically quantify the difference between the channel models in \eqref{eq:H} and \eqref{eq:Hprime} by introducing the relative difference between the average channel gain for the widely used model $\text{E}[\vert h^\prime\vert^2]$ and physics-compliant model $\text{E}[\vert h\vert^2]$, defined as
\begin{equation}
\delta=\frac{\text{E}\left[\vert h\vert^2\right]-\text{E}\left[\vert h^\prime\vert^2\right]}{\text{E}\left[\vert h^\prime\vert^2\right]}.\label{eq:delta1}
\end{equation}
By substituting the scaling laws in \eqref{eq:EG} and \eqref{eq:Gprime} into \eqref{eq:delta1}, we finally obtain the relative difference $\delta$ as
\begin{equation}
\delta=\frac{\left(N_I+\sqrt{\pi N_I}+1\right)^L-N_I^{L}}{N_I^L}.\label{eq:delta2}
\end{equation}

In Fig.~\ref{fig:diff}, we report the relative difference $\delta$ derived theoretically in \eqref{eq:delta2} and obtained as a result of Monte Carlo simulations.
In our simulations, we consider $\varphi_{RI,L,n_I}$, $\varphi_{IT,1,n_I}$, $\alpha_{\ell,\ell-1,n_I}$, and $\beta_{\ell,\ell-1,n_I}$, independent and uniformly distributed in $[0,2\pi]$, for $\ell=2,\ldots,L$ and $n_I=1,\ldots,N_I$.
We observe that the theoretical insights are confirmed by the numerical results since the theoretical relative difference is the same as the simulated one.
In addition, we make the following two observations.
\textit{First}, the relative difference decreases with $N_I$ since in \eqref{eq:EG} the structural scattering term scales with $N_I$ while the RIS-aided term scales with $N_I^2$.
However, the relative difference is non-negligible for a practical number of RIS elements.
Specifically, considering $L=8$ RISs, the relative difference is higher than 2000\% when $N_I=16$ and higher than 200\% when $N_I=128$.
\textit{Second}, the relative difference increases with $L$ since each RIS contributes its structural scattering, which is not included in the widely used model.

%%%%%%%%%%%%%%%%%%%%%%%%%%%%%%%%%%%%%%%%%%%%%%%%%%
\section{Conclusion}
\label{sec:conclusion}

We derive a physics-compliant channel model for multi-RIS aided systems, which differs from the channel model widely used in previous work since the RIS structural scattering is typically neglected in related literature.
We characterize the scaling law of the channel gain of our physics-compliant model and compare it with the widely used model.
Theoretical derivations, corroborated by numerical results, show that the physics-compliant channel gain substantially differs from the widely used channel gain, and that their relative difference can be as high as 2000\%.

\bibliographystyle{IEEEtran}
\bibliography{IEEEabrv,main}
\end{document}